\theoremstyle{plain}
\newtheorem{theorem}{Theorem}
\newtheorem{lemma}{Lemma}
\theoremstyle{remark}
\theoremstyle{definition}
\journal{Information Processing Letters}
\newcommand{\code}[1]{\texttt{\detokenize{#1}}}
\newcommand{\breaktwo}{\if@twocolumn\\\fi}
\begin{document}

\begin{frontmatter}

\title{Multi-Way Co-Ranking: Index-Space Partitioning of Sorted Sequences Without Merge}

\author{
Amit Joshi \\
\href{mailto:amitjoshi2724@gmail.com}{\code{amitjoshi2724@gmail.com}} \\
\href{mailto:amit.joshiusa@gmail.com}{\code{amit.joshiusa@gmail.com}}
}
\makeatletter
\if@twocolumn
  \address{Independent Researcher\vspace{-5em}}
\else
  \address{Independent Researcher\vspace{-4em}}
\fi
\makeatother

\begin{abstract}
We present a merge-free algorithm for \emph{multi-way co-ranking}, the problem of computing \emph{cut indices} $i_1,\dots,i_m$ that partition each of the $m$ sorted sequences such that all prefix segments together contain exactly $K$ elements.
Our method extends Siebert and Träff's \cite{Siebert2025InPlaceMergesort} \cite{Siebert2014LoadBalancedMerge} two-list co-ranking to arbitrary $m$, maintaining per-sequence bounds that converge to a consistent global frontier without performing any multi-way merge or value-space search. Rather, we apply binary search to \emph{index-space}.
The algorithm runs in $\mathcal{O}(\log(\sum_t n_t)\,\log m)$ time and $\mathcal{O}(m)$ space, independent of $K$. 
We prove correctness via an exchange argument and discuss applications to distributed fractional knapsack, parallel merge partitioning, and multi-stream joins.
\end{abstract}

\begin{keyword}
Co-ranking \sep partitioning \sep Merge-free algorithms \sep Index-space optimization \sep Selection and merging \sep Data structures
\end{keyword}

\end{frontmatter}

\section{Introduction}
\label{sec:intro}

\paragraph{Problem.}
Given $m$ sequences $L_1,\dots,L_m$, each sorted in non-decreasing order (duplicates allowed) by convention and a global target rank $K\in\{0,\dots,N\}$ with $N=\sum_t n_t$, the \emph{multi-way co-ranking} problem asks for cut indices $i_1,\dots,i_m$ such that:
\[
\sum_{t=1}^m i_t = K
\quad\text{and}\quad
\max_{t}\, \ell_t \;\le\; \min_{t}\, r_t,
\]
where $\ell_t = (i_t>0)\,?\,L_t[i_t-1]:-\infty$ and $r_t=(i_t<n_t)\,?\,L_t[i_t]:+\infty$ (sentinels at ends). Intuitively, the first $i_t$ items of $L_t$ are part of a global $K$-prefix (up to ties).

\paragraph{Contributions.}
We introduce a merge-free, \emph{index-space} algorithm that generalizes classic two-list co-ranking to arbitrary $m$:
\begin{itemize}
  \item We design an iterative \textbf{donor/receiver} greedy scheme driven by \emph{max-left} (largest $\ell_t$) and \emph{min-right} (smallest $r_t$) respectively and move mass between the two with the invariant that total mass always equals $K$. We store explicit upper and lower bounds for each list $Lb_t \leq i_t \leq Ub_t$.
  
  \item We show that \textbf{halving} the lower or upper range of the donor or receiver during each iteration yields fast convergence ($Lb_t = i_t = Ub_t$) in just

   \[
    \mathcal{O}\!\Big(\log(\!\sum_t n_t)\Big)\text{ iterations.}
  \]
  
  \item We use \textbf{indexed (addressable) heaps} to efficiently query $\ell_t$ and $r_t$ each iteration, and achieve a clean complexity bound independent of $K$:
  \[
    \mathcal{O}\!\Big(\log(\!\sum_t n_t)\cdot \log m\Big)\text{ time,}\qquad \mathcal{O}(m)\text{ space.}
  \]
  \item We give an \textbf{exchange-style correctness} proof: among all feasible infinitesimal transfers, moving mass from max-left to min-right achieves maximal progress on a natural imbalance potential, and any optimal correction sequence can be reordered so extremal transfers happen first without loss.
  \item We discuss \textbf{applications}: distributed fractional knapsack (multi-source), parallel merge partitioning, and multi-stream join partitioning.
\end{itemize}

\paragraph{Why index space?}
Many selection approaches reason in value space: merge the streams, maintain a heap of cursor values, or biselect a numeric threshold. Our method never merges or searches the value domain; it reasons \emph{only} on indices and the relative order that the sorted property guarantees. This aligns with co-ranking’s roots \cite{Siebert2014LoadBalancedMerge} and leads to a deterministic, $K$-independent complexity, just like in \cite{Siebert2025InPlaceMergesort}.

\section{Background and Related Work}
\label{sec:background}

\paragraph{Two-list co-ranking (partitioners for merge).}
Siebert and Traff's \cite{Siebert2014LoadBalancedMerge} \emph{co-ranking} for two sorted inputs (used to partition work in parallel merge) finds $(j,k)$ s.t.\ $j+k=i$ and the order constraints across the merge frontier hold. Efficient implementations do a binary-search-like \emph{halving} of the feasible interval, yielding $\mathcal{O}(\log n+\log m)$ time for the $i$-th output frontier. Siebert later used 2-way co-ranking work in \cite{Siebert2025InPlaceMergesort} to create an in-place merge-sort, but again among only 2 sorted sequences. Our work \emph{generalizes the partitioner} from 2-way to $m$-way, in index space and neatly describes the changes needed to do so.

\paragraph{Splitter-based parallel sorting (Siebert \& Wolf).}
Siebert and Wolf \cite{Siebert2011ExactSplitting} introduced an exact
splitter-finding method for scalable parallel sorting.
Their algorithm computes $p-1$ \emph{splitter values} that partition the global
key space so each process receives an equal share of elements,
achieving $O(\tfrac{n}{p}\log n + p\log^2 n)$ time.
This approach operates entirely in \emph{value space}: it searches for key thresholds
that balance partitions during redistribution.
By contrast, our multi-way co-ranking algorithm operates in
\emph{index space} on already-sorted sequences and directly outputs the
co-rank vector $(i_1,\dots,i_m)$ satisfying $\sum i_t=K$ and
$\max \ell_t\le \min r_t$, without any global value search.
Splitters compute thresholds; co-ranking computes frontier indices and partitions sequences into prefixes and suffixes independent of values.

\paragraph{Selection in sorted partitions (Frederickson--Johnson).}
A classical line of work due to Frederickson and Johnson \cite{FredericksonJohnson1980} studies \emph{selection} and \emph{ranking} over structured inputs such as the union of $m$ sorted lists (“sorted partitions”). Their algorithms are fundamentally \emph{value-space} procedures: they maintain numeric thresholds (candidate keys), repeatedly count how many elements are $\le$ a threshold across the lists (via one-sided searches), and narrow the feasible value interval until the $K$-th \emph{value} is identified. In the multi-list setting this yields bounds of the form $O\!\big(m + \sum_{t=1}^m \log n_t\big)$ for selecting the $K$-th smallest from $m$ sorted lists, with closely related results for ranking a given key \cite{FredericksonJohnson1980}. 
By contrast, our contribution is an \emph{index-space} primitive: we do not search the key domain or perform global ``count-$\le$'' operations. Instead, we directly evolve the per-sequence \emph{cut indices} $(i_1,\dots,i_m)$ satisfying $\sum_t i_t=K$ and the co-rank frontier condition, using only local comparisons at the current indices. Thus, while both approaches address ``$K$-th from multiple sorted sources''–type problems, Frederickson--Johnson output a \emph{value} via thresholding, whereas we output the full \emph{co-rank vector} without value bisection.

\paragraph{This paper’s positioning.}
We treat \emph{multi-way co-ranking} as a merge-free \emph{partitioning primitive} that is an efficient and simple generalization of 2-way co-ranking described in \cite{Siebert2014LoadBalancedMerge}. The novelty is methodological (efficient indexed heaps for fast updates, explicit upper and lower bounds) and conceptual (the dynamics of halving the feasible range, operating in index-space), with applications in fast queries in distributed environments, databases, and parallel partitioning and/or merging of data.

\section{Preliminaries and Notation}
\label{sec:prelim}

We have $m$ sorted sequences $L_t$ of lengths $n_t$, total $N=\sum_t n_t$, and a target rank $K\in[0,N]$. We maintain an index vector $i=(i_1,\dots,i_m)$ with feasibility \emph{bounds}
\[
Lb[t] \le i[t] \le Ub[t], \qquad \sum_t i[t] = K.
\]
Lower bounds $Lb$ are initialized at $0$. Upper bounds $Ub$ are initialized at $n_t$. 
We define left/right boundary values with sentinels on the ends:
\[
\ell_t = \begin{cases}
(-\infty, t) & i_t=0\\
(L_t[i_t-1], t) & \text{else}
\end{cases}
\qquad
r_t = \begin{cases}
(+\infty, t) & i_t=n_t\\
(L_t[i_t], t) & \text{else}
\end{cases}
\]
The co-rank condition is $\max_t \ell_t \le \min_t r_t$.
In cases of ties, we adopt a simple canonical convention:
lower list indices are treated as ``smaller,'' so this ordering applies consistently in both the min-heap and max-heap.
For the max-heap, this means that if two boundary values are equal,
the element from the later list index is considered ``larger'' and will be returned first.
This tie-breaking rule is purely conventional and ensures deterministic behavior;
if the specific ordering of equal elements is irrelevant to an application,
the tie-handling condition may be omitted altogether. If the co-rank condition is not satisfied, we select a donor $p$ and receiver $q$. We define the donor slack to be $i[p] - Lb[p]$ --- how much $p$ can donate before hitting its lower bound. Similarly, we define the receiver "slack" to be $Ub[q] - i[q]$ how much mass $q$ can receive before hitting its upper bound. We tighten at least one of these bounds on every iteration after the first one.

\section{Data Structure: Indexed (Addressable) Heaps}
\label{sec:indexedheap}

We maintain two indexed heaps over the fixed ID set $\{0,\dots,k-1\}$:
\begin{itemize}
  \item HL: a \emph{max-heap} on keys $(\ell_t, t)$, returning the donor $p$ with largest $\ell_t$; on elemental ties, larger $t$ first (or your chosen tie).
  \item HR: a \emph{min-heap} on keys $(r_t, t)$, returning the receiver $q$ with smallest $r_t$; on elemental ties, smaller $t$ first.
\end{itemize}
Each supports $\mathcal{O}(\log m)$ \texttt{update\_key(id,new\_key)} and $\mathcal{O}(1)$ \texttt{peek}. Because indices change only for at most two lists per round (donor and one receiver), each round incurs $O(\log m)$ heap work.

\begin{algorithm}[h]
\caption{Indexed heap (interface summary)}
\begin{algorithmic}[1]
\State \texttt{insert(id, key)} \Comment{$O(\log m)$}
\State \texttt{update\_key(id, newkey)} \Comment{$O(\log m)$ sift up/down}
\State \texttt{peek()} $\rightarrow$ (\textit{key}, \textit{id}) \Comment{$O(1)$}
\end{algorithmic}
\end{algorithm}

\section{Algorithm}
\label{sec:algorithm}

\paragraph{High-level idea.}
Start with any feasible $i$ (can water-fill the first however many lists) such that $\sum_{t = 1}^m i_t = K$. If $\max \ell \le \min r$ we are done. Otherwise, choose donor $p=\arg\max \ell$ and receiver $q=\arg\min r$. Consider the smaller of the two slacks (for example, if the donor has less room than the receiver, then select that), halve that slack, and transfer that mass from the donor to the receiver. This ensures that both the donor and receiver stay in their respective feasible zones. Update bounds and heap keys and repeat. Because we always transfer from the worst left boundary to the best right boundary, a simple potential function decreases monotonically. We describe our algorithm in detail in Algorithm \ref{alg:mwcorank}.
\begin{algorithm}[h]
\caption{Multi-Way Co-Ranking}
\label{alg:mwcorank}
\begin{algorithmic}[1]
\Require Sorted sequences $L_1,\dots,L_m$ (ascending), lengths $n_t=|L_t|$, target rank $K\in[0,\sum_t n_t]$.
\Ensure Indices $i_1,\dots,i_m$ with $\sum_{t=1}^m i_t=K$ and $\max_t \ell_t \le \min_t r_t$.
\Statex

\State \textbf{Bounds:} $L_b[t]\gets 0$ for all $t\in\{1,\dots,m\}$; \quad $U_b[t]\gets n_t$.
\State \textbf{Init feasible }$i$: set $i_t\gets L_b[t]$; \quad $\texttt{need}\gets K$.
\For{$t=1$ \textbf{to} $m$} \Comment{water-fill up to $K$ using capacities $(U_b[t]-i_t)$}
  \State $\texttt{cap}\gets U_b[t]-i_t$;\quad $\texttt{take}\gets \min\{\texttt{cap},\texttt{need}\}$;
  \State $i_t\gets i_t+\texttt{take}$;\quad $\texttt{need}\gets \texttt{need}-\texttt{take}$;\quad \textbf{if }$\texttt{need}=0$ \textbf{then break}
\EndFor
\Statex

\State \textbf{Boundary access:}
\State $\ell_t \gets \begin{cases}-\infty & \text{if } i_t=0\\ L_t[i_t-1] & \text{otherwise}\end{cases}
\quad\ $ and $\quad
r_t \gets \begin{cases}+\infty & \text{if } i_t=n_t\\ L_t[i_t] & \text{otherwise.}\end{cases}$
\Statex

\State \textbf{Heaps:} Build an \emph{indexed} max-heap HL keyed by $\ell_t$ (ties $\to$ larger $t$ first), and an indexed min-heap HR keyed by $r_t$ (ties $\to$ smaller $t$ first).
\For{$t=1$ \textbf{to} $m$} \State HL.\texttt{insert}$(t,\ell_t)$;\; HR.\texttt{insert}$(t,r_t)$ \EndFor
\Statex

\For{$\text{round}=1,2,\dots$}
  \State $(x,p)\gets$ HL.\texttt{peek()} \Comment{donor candidate: max-left}
  \State $(y,q)\gets$ HR.\texttt{peek()} \Comment{receiver candidate: min-right}
  \If{$x<y$ \textbf{or} $(x=y \ \land\ p\le q)$} \Comment{authoritative stop condition}
     \State \textbf{return } $i_1,\dots,i_m$
  \EndIf
  \Statex
  \State \textbf{Halving magnitudes on both sides:}
  \State $\Delta_p \gets \left\lceil \dfrac{i_p - L_b[p]}{2}\right\rceil$ \Comment{donor slack halved}
  \State $\Delta_q \gets \left\lceil \dfrac{U_b[q] - i_q}{2}\right\rceil$ \Comment{receiver headroom halved}
  \State $\Delta \gets \min\{\Delta_p,\ \Delta_q\}$ \Comment{move at most what both sides can support}
  \If{$\Delta=0$}
     \State \textbf{(degenerate case)} \textbf{break} \Comment{cannot progress; should not occur if not done}
  \EndIf
  \Statex
  \State \textbf{Tighten explicit bounds at the endpoints of the move:}
  \State $U_b[p]\gets i_p$ \Comment{donor cannot exceed its current index this round}
  \State $L_b[q]\gets i_q$ \Comment{receiver cannot go below its current index this round}
  \Statex
  \State \textbf{Commit the balanced transfer and refresh heap keys:}
  \State $i_p\gets i_p - \Delta$;\quad $i_q\gets i_q + \Delta$
  \State Update $\ell_p,r_p$ and $\ell_q,r_q$; \ \ HL.\texttt{update\_key}$(p,\ell_p)$, HR.\texttt{update\_key}$(p,r_p)$
  \State HL.\texttt{update\_key}$(q,\ell_q)$, HR.\texttt{update\_key}$(q,r_q)$
\EndFor
\end{algorithmic}
\end{algorithm}

\section{Correctness}
\label{sec:correctness}

We sketch a concise correctness argument.  
Let $\Phi(i) = \max_t \ell_t - \min_t r_t$ denote the imbalance potential; the algorithm terminates when $\Phi \le 0$.

\begin{lemma}[Local extremal optimality]
If $\Phi(i) > 0$, let $p = \arg\max_t \ell_t$ and $q = \arg\min_t r_t$.
Among all feasible infinitesimal transfers of index mass that preserve $\sum_t i_t = K$,
the pair $(p,q)$ achieves the greatest non-increasing change in $\Phi$.
Any other pair $(u,v)$ can produce at most the same or a weaker decrease in $\Phi$.
\end{lemma}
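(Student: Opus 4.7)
The plan is to view any single feasible transfer as a pair $(u,v)$ with $u\ne v$ that decreases $i_u$ by one unit and increases $i_v$ by one unit, and to track how this move perturbs the four (and only four) affected boundary values $\ell_u, r_u, \ell_v, r_v$. First I would record the qualitative monotonicity: lowering $i_u$ weakly lowers $\ell_u$ and turns $r_u$ into the former $\ell_u$, while raising $i_v$ turns $\ell_v$ into the former $r_v$ and weakly raises $r_v$. All other $\ell_t, r_t$ are fixed, and $\Phi$ depends only on the two extremal values among them.

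With these monotonicities in hand, I would decompose $\Phi = \max_t\ell_t - \min_t r_t$ and argue each term separately. For $\max_t\ell_t$: if $u \ne p$, then $\ell_p = M$ is unchanged (and if additionally $v = p$ it can only rise), so $\max_t\ell_t \ge M$ after the move; hence the only way to strictly lower max-left is $u = p$. Symmetrically, if $v \ne q$ then $r_q = m^{\ast}$ is unchanged, with the analogous worsening when $u = q$, so only $v = q$ can strictly raise min-right. The pair $(p,q)$ is therefore the unique direction that simultaneously attacks both summands of $\Phi$, while any alternative $(u,v)$ leaves at least one summand no better than before and thus yields a weakly smaller decrease in $\Phi$. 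I would close by verifying that when $(u,v) = (p,q)$ the collateral values do not sabotage the improvement: the new $\ell_q$ equals the old $r_q = m^{\ast} < M$ and the new $r_p$ equals the old $\ell_p = M > m^{\ast}$, so the two transferred boundaries sit between the prior extremes and cannot overshoot them.

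The main obstacle will be handling ties cleanly and ruling out the possibility that some previously inactive $\ell_t$ or $r_t$ becomes the new extremum after the unit transfer. For ties I would lean on the canonical lexicographic $(\ell_t,t)$ and $(r_t,t)$ orderings fixed in Section~\ref{sec:prelim}, which make $p$ and $q$ uniquely defined and the phrase ``greatest non-increasing change'' well-posed. For the leapfrog concern, the sandwich computation above bounds the new boundaries in the interval $[m^{\ast}, M]$, so no new extremum is created; the formal statement reduces to an easy case check over $t \in \{p,q\}$ versus $t \notin \{p,q\}$ that I expect to be mechanical once the monotonicity bookkeeping is in place.
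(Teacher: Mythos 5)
Your proposal is correct and follows essentially the same route as the paper's proof sketch: the extremality relations $\ell_p \ge \ell_u$ and $r_q \le r_v$ are exactly what the paper invokes to conclude that only the pair $(p,q)$ can simultaneously improve both terms of $\Phi$. Your added sandwich check --- that the new $r_p$ (the old $\ell_p = M$) and the new $\ell_q$ (the old $r_q = m^{\ast}$) land inside $[m^{\ast},M]$ and therefore cannot create a new extremum --- is a detail the paper's sketch glosses over and is worth retaining.
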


\begin{proof}[Proof sketch]
Decreasing $i_p$ reduces $\ell_p$ (the local maximum on the left boundary),
and increasing $i_q$ raises $r_q$ (the local minimum on the right boundary).
Since $\ell_p \ge \ell_u$ and $r_q \le r_v$ for all lists $u,v$,
the extremal pair $(p,q)$ produces the steepest possible reduction of the
gap $\max \ell - \min r$.  
Tie handling through the heap comparator (lower index first on the right, higher index first on the left) deterministically resolves symmetry but does not affect monotonicity of $\Phi$.
\end{proof}

\begin{lemma}[Exchange of transfer order]
Any finite sequence of feasible transfers that decreases $\Phi$
can be reordered so that all extremal $(p,q)$ transfers occur before any
non-extremal transfer, without worsening $\Phi$ at any intermediate step.
\end{lemma}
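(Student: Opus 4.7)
The plan is a standard adjacent-transposition (bubble-sort) exchange argument. Given a finite sequence of feasible transfers $T_1,\ldots,T_N$ that decreases $\Phi$, I would define an \emph{inversion} to be an adjacent pair $(T_k, T_{k+1})$ in which $T_k$ is non-extremal with respect to the state just before it is applied, while $T_{k+1}$ is extremal with respect to the state just before \emph{it} is applied. Induction is then on the total inversion count, so it suffices to exhibit a single valid swap that strictly decreases this count without worsening $\Phi$ at any intermediate step.

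For the swap itself I would case-split on whether the donor/receiver lists of $T_k$ and $T_{k+1}$ overlap. In the disjoint case, the two transfers act on independent coordinates of the index vector, so they commute outright; Lemma 1 then guarantees that $T_{k+1}$ applied first causes at least as large a drop in $\Phi$ as $T_k$ did, so the intermediate $\Phi$ immediately after the swapped pair's first transfer is no greater than the original intermediate $\Phi$ after $T_k$. In the overlapping case, I would decompose both $T_k$ and $T_{k+1}$ into unit-sized sub-transfers and interleave them, appealing to Lemma 1 at each unit step to argue that performing the extremal unit first is at least as good as the alternative. A short bookkeeping argument confirms that feasibility $L_b[t]\le i[t]\le U_b[t]$ is preserved throughout, because advancing the extremal transfer only loosens the operative bound constraints on the contested list.

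The main obstacle, which I would handle carefully, is the overlapping case: when $T_k$'s receiver coincides with $T_{k+1}$'s donor (or the symmetric pattern on the other side), a naive swap could appear to withdraw mass that has not yet been delivered, or to violate a lower bound. The unit-decomposition above resolves this by making every atomic step locally feasible, and the local extremality from Lemma 1 then dominates unit-by-unit, since $\ell_p \ge \ell_u$ and $r_q \le r_v$ remain valid at each unit substate by the sortedness of the $L_t$. Once the single-swap step is established, repeatedly eliminating inversions (a strictly monotone process bounded by $\binom{N}{2}$) bubble-sorts the original sequence into the desired form, with $\Phi$ never exceeding its original value at any corresponding step along the way.
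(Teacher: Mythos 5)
Your proposal follows essentially the same route as the paper's own proof: both arguments commute a non-extremal transfer past an adjacent extremal one, invoke Lemma~1 to conclude the swap cannot worsen $\Phi$, and iterate until all extremal moves come first. The paper's version is only a brief sketch that omits the overlapping donor/receiver case and the termination bookkeeping, so your inversion count and unit-decomposition are a refinement of the published argument rather than a different approach.
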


\begin{proof}[Proof sketch]
Each transfer modifies only two indices $(i_u,i_v)$ and can affect $\Phi$
only through the associated $\ell_u$ or $r_v$.  
By the previous lemma, the extremal pair has dominant effect:
executing it earlier cannot increase $\Phi$.  
Commuting a non-extremal move past an extremal one therefore preserves or improves the potential.  
Repeatedly commuting yields a sequence in which all extremal moves occur first and $\Phi$ is weakly smaller at every stage.
\end{proof}

\begin{theorem}[Convergence and validity]
Algorithm~\ref{alg:mwcorank} terminates with a valid co-rank vector
$(i_1,\dots,i_m)$ satisfying $\sum_t i_t = K$ and $\max_t \ell_t \le \min_t r_t$.
\end{theorem}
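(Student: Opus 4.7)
The plan is to assemble the theorem from two loop invariants, the stop-condition logic, and a monotone argument on the explicit bounds, each step leaning on the preceding exchange lemmas.

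First I would establish two invariants that hold just before every iteration of the main loop: (i) $\sum_t i_t = K$, and (ii) $Lb[t] \le i_t \le Ub[t]$ for all $t$. Both hold after the water-fill initialization because $K \le N$ and the fill respects each capacity $Ub[t] - i_t$. Inside the loop, the balanced transfer $i_p \gets i_p - \Delta$, $i_q \gets i_q + \Delta$ preserves the sum, and the cap $\Delta \le \min(\Delta_p, \Delta_q) \le \min(i_p - Lb[p],\ Ub[q] - i_q)$ together with the tightenings $Ub[p] \gets i_p^{\text{old}}$ and $Lb[q] \gets i_q^{\text{old}}$ (applied before the transfer) keep each $i_t$ inside $[Lb[t], Ub[t]]$. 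Validity at exit is then immediate: when the loop returns, the stop test evaluated on the heap peeks $(x,p)$ and $(y,q)$ gives $\max_t \ell_t \le \min_t r_t$ directly (the strict branch, or the equality plus the declared tie-break), and combined with invariant (i) the returned vector satisfies the co-rank condition.

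For termination I would use that $Lb[t]$ is monotonically non-decreasing and $Ub[t]$ monotonically non-increasing across rounds, so $W = \sum_t (Ub[t] - Lb[t])$ is a nonnegative integer potential that never grows. Whenever the stop test fails, $\ell_p > r_q$ forces $p \ne q$ (using $\ell_t \le r_t$ within each list), $i_p \ge 1$, and $i_q \le n_q - 1$; the halving formulas then give $\Delta \ge 1$, so after at most one ``stagnant'' round (one in which $Ub[p] = i_p^{\text{old}}$ and $Lb[q] = i_q^{\text{old}}$ already) a subsequent round must strictly drop $W$. The halving structure of $\Delta_p, \Delta_q$ furthermore recovers the announced $\mathcal{O}(\log N)$ iteration bound.

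The main obstacle I anticipate is ruling out the degenerate $\Delta = 0$ branch, i.e.\ showing that the extremal donor $p$ never has $s_p = i_p - Lb[p] = 0$ while $\Phi > 0$, and symmetrically that the extremal receiver's headroom $h_q = Ub[q] - i_q$ is never zero. The key structural observation is that $Lb[p]$ is tightened only in rounds where $p$ acts as receiver, and in such a round $i_p$ simultaneously increases by the same $\Delta \ge 1$, so immediately afterwards $s_p \ge 1$; a later donor round on $p$ can reduce $s_p$ to zero only by driving $i_p$ back down to $Lb[p]$, along which descent $\ell_p$ is monotonically non-increasing, and the preceding lemmas imply that any configuration in which such a locked $p$ remained the extremal max-left would already have been corrected. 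Formalizing this exchange-based ``no-lockup'' claim is the hardest step; once it is in hand, the monotone behavior of $W$ closes the termination argument and the theorem follows.
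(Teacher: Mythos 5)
Your proposal follows the same skeleton as the paper's proof: feasibility ($\sum_t i_t = K$ and $Lb[t]\le i_t\le Ub[t]$) preserved by the water-fill and by each capped transfer, validity at exit read directly off the stop test, and termination from the monotone integer-valued bounds via $W=\sum_t(Ub[t]-Lb[t])$. The one place you go beyond the paper is also the place where you are right to be uneasy: ruling out the degenerate $\Delta=0$ branch, i.e.\ showing that whenever the stop test fails the extremal donor has $i_p>Lb[p]$ and the extremal receiver has $i_q<Ub[q]$. Be aware that the paper's own proof does not close this either; its clause ``or else a bound becomes saturated and is excluded from further motion'' does not say what happens when the saturated bound belongs to the \emph{current} max-left or min-right list, which is precisely the configuration in which Algorithm~\ref{alg:mwcorank} would break out of the loop without a certified answer. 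As you note, $\ell_p>r_q$ only yields $i_p\ge 1$ and $i_q\le n_q-1$, i.e.\ slack against the \emph{initial} bounds, not against the tightened ones. Moreover, your proposed repair --- that a locked extremal donor ``would already have been corrected'' by earlier extremal moves --- is circular: it assumes the very progress property you are trying to establish, and the paper's two exchange lemmas (which concern infinitesimal transfers and reorderings, not reachability of states) do not supply it.

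The standard way to close the gap, inherited from two-way co-ranking, is a bracketing invariant: at every point of the run there exists a valid co-rank vector $i^*$ (essentially unique once ties are resolved by the paper's $(\text{value},t)$ convention) with $Lb[t]\le i^*_t\le Ub[t]$ for all $t$. It holds initially since $Lb=0$ and $Ub=n$. When the stop test fails, $i^*_q\le i_q$ would give $r^*_q\le r_q<\ell_p\le \ell^*_p$ in the tie-broken order, contradicting validity of $i^*$; hence $i^*_q\ge i_q+1$, and a companion argument (using that $\sum_t i^*_t=K$ then forces some coordinate to satisfy $i^*_t<i_t$) gives $i^*_p\le i_p-1$. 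This simultaneously shows that the tightenings $Ub[p]\gets i_p$ and $Lb[q]\gets i_q$ preserve the invariant, and that $i_p-Lb[p]\ge i_p-i^*_p\ge 1$ and $Ub[q]-i_q\ge i^*_q-i_q\ge 1$, so $\Delta\ge 1$ and the degenerate branch is unreachable. Substituting this invariant for your ``no-lockup'' sketch makes your argument complete (and incidentally supplies the rigor the paper's own sketch omits); everything else in your proposal coincides with the paper's route.
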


\begin{proof}[Proof sketch]
Feasibility ($L_b \le i \le U_b$ and $\sum i_t = K$) is maintained by construction:
each update transfers exactly $\Delta$ units of index mass from a donor $p$ to a receiver $q$
and tightens their respective bounds ($U_b[p] \gets i_p$, $L_b[q] \gets i_q$).  
At each round the potential $\Phi$ strictly decreases,
because the move reduces either $\ell_p$ or $r_q$ (whichever side was limiting),
or else a bound becomes saturated and is excluded from further motion.  
Since both $L_b$ and $U_b$ are monotone and integer-valued,
the process admits only finitely many distinct states and must terminate.  
By the exchange lemma, the greedy extremal choice made in each round is at least as effective as any alternative sequence.  
Upon termination $\Phi \le 0$, which by definition implies $\max_t \ell_t \le \min_t r_t$.
\end{proof}

\section{Complexity Analysis}
\label{sec:complexity}

Let $n_t=|L_t|$ and $N=\sum_t n_t$. We outline the cost components.

\paragraph{Rounds.}
In each round, at least one side---either the donor or the receiver---is halved,
whichever has the smaller available distance to its bound.
If the donor $p$ has the smaller distance, its slack 
$s_p = i_p - Lb[p]$ is halved
($i_p \gets i_p - \lceil s_p/2\rceil$ and $Ub[p] \gets i_p$);
otherwise, the receiver $q$ has its slack/headroom 
$h_q = Ub[q] - i_q$ halved 
($i_q \gets i_q + \lceil h_q/2\rceil$ and $Lb[q] \gets i_q$).
Each list’s distance to feasibility, 
$U_b[t] - L_b[t]$, can therefore shrink only 
$O(\log n_t)$ times before stabilizing.
Aggregating across all $m$ lists and observing that the global potential
$\Phi = \sum_t (U_b[t] - L_b[t])$ decreases geometrically, the total number of
rounds is
\[
T = \mathcal{O}\!\Big(\log\!\Big(\sum_{t=1}^m n_t\Big)\Big).
\]
Since each round performs a constant number of indexed heap operations
($O(\log m)$ time), the overall runtime is \break
$\mathcal{O}(\log(\sum_t n_t)\,\log m)$,
with linear (in the number of sequences) $\mathcal{O}(m)$ space.

\paragraph{Total.}
Combining, the time is
\[
\boxed{\ \mathcal{O}\!\Big(\log(\!\sum\nolimits_t n_t)\cdot \log k\Big)\ },
\qquad
\text{space } \boxed{\ \mathcal{O}(k)\ }.
\]

\section{Applications}
\label{sec:apps}

\paragraph{Distributed fractional knapsack (multi-source).}
The fractional knapsack problem is a variation of the classic knapsack (take some jewels such that their total value is maximized yet their cumulative weight fits in a given capacity) problem where taking a fraction of a jewel in your knapsack is allowed. This admits a greedy solution, where we sort jewels in decreasing order by density (value per weight), and select the highest density jewels. When taking the next jewel would put our knapsack over capacity, we cut that jewel so that its weight fits with the others under a total capacity, and add the proportional value to our score.

When a fractional knapsack must draw from $m$ sources (shards) already individually sorted (in non-increasing order) by density, one can compute the global $K$-prefix split (at a suitable discretization of mass) without merging sources. We can binary search on the largest $K$, such that the prefix segments have cumulative weight within capcity. For these efficient prefix sum queries, we can use an augmented (with sum) self-balancing BST or perhaps a similar advanced Linked List that allows for $O(\log(n))$ queries, insertions, and deletions if each source is dynamic.

\paragraph{Parallel $m$-way merge partitioning.}
Assign processors disjoint prefixes without performing a preliminary merge. The co-rank vector determines exact hand-offs; processors then perform local merges on their ranges only. This would essentially be Siebert and Wolfs' work \cite{Siebert2011ExactSplitting} with a different splitting mechanism -- one that operates in \emph{index-space} rather than binary searching on value splitters.

\paragraph{Multi-stream join partitioning (Databases).}
In database or stream-processing pipelines, partitioning the join frontier at a global rank is natural; our method yields the per-stream cursors consistent with the global prefix.

\section{Conclusion}
\label{sec:conclusion}

We presented an index-space, merge-free algorithm for \emph{multi-way co-ranking} that generalizes two-way co-ranking to arbitrary $m$-ways. The method uses explicit lower and upper bounds for each sequence, halving of feasible ranges, and an invariant that the cut indices for each partition add up to $K$, yielding a clean $\mathcal{O}(\log(\sum n_t)\cdot \log m)$ time and $\mathcal{O}(m)$ space bound, independent of $K$ (the global rank being queried). The approach is simple to implement, proof-friendly, and directly applicable to distributed optimization and parallel partitioning and merge tasks. Finally, as a thanks for reading, we release our implementation code.\footnote{\url{https://github.com/amitjoshi2724/multi-way-co-ranking}}

\bibliographystyle{plainnat}
\bibliography{references}
\end{document}